\tikzstyle{arrow}=[draw, -latex]
\begin{document}

\title{A Two Query Adaptive Bitprobe Scheme Storing Five Elements}
\author{Mirza Galib Anwarul Husain Baig, Deepanjan Kesh, \and Chirag Sodani }
\institute{Indian Institute of Technology Guwahati, Guwahati, India \\ \email \{mirza.baig,deepkesh,chirag.sodani\}@iitg.ac.in}
\maketitle

\begin{abstract}
   We are studying the adaptive \emph{bitprobe} model to store an arbitrary subset $\mathcal{S}$ of size at most five from a universe $\mathcal{U}$ of size $m$, and answer the membership queries of the form ``Is x in S?'' in two \emph{bitprobes}. In this paper, we present a data structure for the aforementioned problem. Our data structure takes $\mathcal{O}(m^{10/11})$  space. This result improves the non-explicit result by Garg and Radhakrishnan \cite{garg2014set} which takes $\mathcal{O}(m^{20/21})$ space, and the explicit result by Garg \cite{gthesis} which takes $\mathcal{O}(m^{18/19})$ space for the aforementioned set and query sizes.
  \end{abstract}

\begin{keywords}
  Set Membership Problem, Bitprobe Model, Data Structures
\end{keywords}

\section{Introduction}\label{sec:intro} 
In the static membership problem, we are given a universe $\mathcal{U}$ of size $m$, and our task is to design a data structure that can store an arbitrary subset $\mathcal{S}$ of size at most $n$ such that the membership queries of the form ``Is $x$ in $S$?'' can be answered correctly. We study this problem in the \emph{bitprobe} model of computation. The complexity of the static membership problem in this model is measured in terms of the size of the data structure denoted by $s$, and the number of bits of the data structure accessed denoted by $t$ . It is the later of the two properties which lend its name \emph{bitprobe} model. In this model all other operations are free. Solutions to the above mentioned problems in this model are termed as schemes. Each scheme consists of two parts, one is the storage scheme, and the other is query scheme. Storage scheme maps an arbitrary subset of cardinality  $n$ from a universe of size $m$ given to be stored to the $s$ bits of the data structure. Query scheme maps every element belonging to the universe $m$ to the $t$ locations of the data structure and it decides the membership of the query element by reading those $t$ locations. The storage and query scheme together gives a $(n,m,s,t)$-scheme which stores a set of size at most $n$ from a universe of size $m$ and uses $s$ bits in such a way that membership query can be answered in $t$ probes. This is a well studied problem over several decades and it has been discussed in \cite{buhrman2002bitvectors},\cite{radhakrishnan2001explicit1},\cite{alon2009power},\cite{radhakrishnan2010data2},\cite{lewenstein2014improved3},\cite{nicholson2013survey},\cite{garg2014set} and \cite{garg2016set}. 
\paragraph{}
A $(n,m,s,t)$-scheme is said to be adaptive if the location of the probes depends upon the bit returned by the prior probes. Whereas in a non-adaptive scheme location of the probes are fixed and it does not depend upon the bit returned by prior probes. 

\subsection{Two Adaptive \emph{Bitprobe} Model}
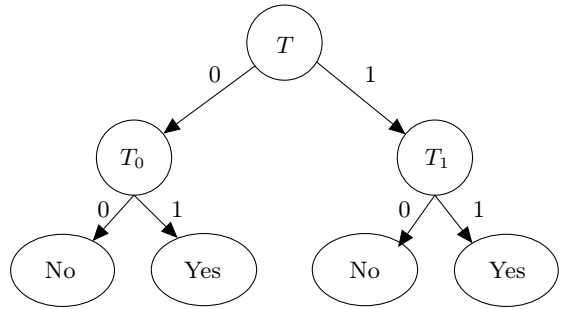
\begin{figure}[t]
	\begin{center}
		\begin{tikzpicture}[line cap=round,line join=round,>=triangle 45,x=1.0cm,y=1.0cm]
		\clip(-3.85,-3.83) rectangle (3.85,0.95);
		\draw(0,0) circle (0.5cm);
		\draw(-2,-1.53) circle (0.5cm);
		\draw(2,-1.53) circle (0.5cm);
		\draw [rotate around={0:(-2.95,-3.03)}] (-2.95,-3.03) ellipse (0.69cm and 0.48cm);
		\draw [rotate around={-180:(2.95,-3.03)}] (2.95,-3.03) ellipse (0.69cm and 0.48cm);
		\draw [rotate around={0:(-1.07,-3.02)}] (-1.07,-3.02) ellipse (0.7cm and 0.48cm);
		\draw [rotate around={-180:(1.07,-3.02)}] (1.07,-3.02) ellipse (0.7cm and 0.48cm);
		\draw [->] (-0.39,-0.31) -- (-1.61,-1.21);
		\draw [->] (0.43,-0.26) -- (1.61,-1.22);
		\draw [->] (-2,-2.03) -- (-2.55,-2.64);
		\draw [->] (-2,-2.03) -- (-1.42,-2.6);
		\draw [->] (2,-2.03) -- (1.51,-2.72);
		\draw [->] (2,-2.03) -- (2.5,-2.67);
		\draw (-0.22,0.2) node[anchor=north west] {$T$};
		\draw (-2.3,-1.3) node[anchor=north west] {$T_0$};
		\draw (1.75,-1.3) node[anchor=north west] {$T_1$};
		\draw (-3.3,-2.8) node[anchor=north west] {No};
		\draw (-1.45,-2.8) node[anchor=north west] {Yes};
		\draw (0.75,-2.8) node[anchor=north west] {No};
		\draw (2.6,-2.8) node[anchor=north west] {Yes};
		\draw (-1.12,-0.2) node[anchor=north west] {0};
		\draw (0.95,-0.2) node[anchor=north west] {1};
		\draw (-2.6,-2) node[anchor=north west] {0};
		\draw (-1.62,-2) node[anchor=north west] {1};
		\draw (1.4,-2) node[anchor=north west] {0};
		\draw (2.39,-2) node[anchor=north west] {1};
		\end{tikzpicture}
	\end{center}
	\caption{A decision tree for the two adaptive \emph{bitprobe} model}
	\label{fig1}
\end{figure}
In this section, we will discuss a two adaptive \emph{bitprobe} model in the context of two adaptive \emph{bitprobe} scheme. A two adaptive \emph{bitprobe}  scheme in this model  consist of three tables namely $T,T_0$, and $T_1$ as shown in Figure~\ref{fig1}. Furthermore, as discussed earlier the data structures in this model consist of two schemes a storage scheme and a query scheme. Storage scheme maps an arbitrary subset given to be stored to the three tables mentioned earlier.  Query scheme decides the membership of a query element by probing two location of the data structure. Given a query element, the first probe is made into the table $T$. The next query depends upon whether the bit returned by the table $T$ is zero or one. If the bit returned by the table $T$ is zero it makes next query to the table $T_0$ otherwise to the table $T_1$. We say that a query element is part of the set if and only if the last query returns one.
\subsection{The Problem Statement} In this paper, we are dealing with the design of explicit adaptive scheme in the \emph{bitprobe} model to store an arbitrary subset of size at most five from a universe of size $m$ and answer the membership query in two adaptive bit probes. In other words our objective is to design an adaptive $(5,m,s,2)$-scheme in the \emph{bitprobe} model.
\subsection{Previous Results}
 As we are going to study a two adaptive \emph{bitprobe} scheme, let us discuss some existing results in the context of this problem. For the set of size one $(n=1)$, there exist a trivial scheme which takes $\mathcal{O}(m^{1/2})$  space. The space requirement for this scheme matches the lower bound of $\Omega (m^{1/2})$ \cite{buhrman2002bitvectors}. For the set of size two $(n=2)$, Radhakrishnan {\em et al.}~\cite{radhakrishnan2001explicit1} came up with a scheme which takes $\mathcal{O}(m^{2/3})$ space. Radhakrishnan et al. \cite{radhakrishnan2010data2} conjectured that this scheme is asymptotically tight but it has not been resolved yet. For the set of size three  $(n=3)$, Baig and Kesh \cite{baig2018two} came up with a scheme which takes $\mathcal{O}(m^{2/3})$ space. This scheme has been proved asymptotically tight by Kesh \cite{DBLP:conf/fsttcs/Kesh18}. \paragraph{}Moreover, for the set of size four $(n=4)$, Baig et al. \cite{DBLP:conf/iwoca/BaigKS19} have given a scheme which takes $\mathcal{O}(m^{5/6})$ space. This scheme improves upon the non-explicit $(n,m,c \cdot m^{1-\frac{1}{4n+1}},2)$-scheme   by Garg and Radhakrishnan \cite{garg2014set} for the set of size four $(n=4)$. For the given set size their scheme takes $\mathcal{O}(m^{16/17})$  space. Our scheme also improves upon the explicit $(n,m,c \cdot m^{1-\frac{1}{4n-1}},2)$-scheme given by Garg \cite{gthesis} for the set of size four $(n=4)$. His scheme takes $\mathcal{O}(m^{14/15})$ space for the given set size.  
 \paragraph{}
 In this paper, we have come up with a scheme for the set of size five $(n=5)$. Our scheme takes $\mathcal{O} (m^{10/11})$ space. This scheme improves upon the non-explicit scheme by Garg and Radhakrishnan \cite{garg2014set} for the set of size five $(n=5)$. For the set of size five $(n=5)$ their scheme takes $\mathcal{O}(m^{20/21})$  space. Our scheme also improves upon the explicit scheme given by Garg \cite{gthesis} for the set of size five $(n=5)$. His scheme takes $\mathcal{O}(m^{18/19})$ space for the set of size $(n=5)$. 

\section{The Approach to the Problem}
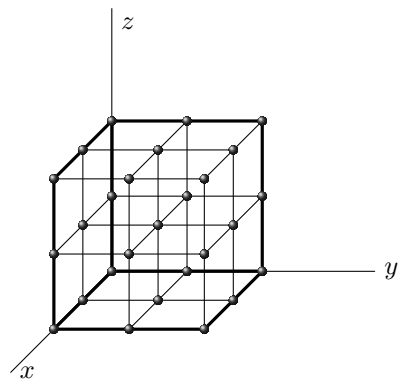
\begin{wrapfigure}{r}{0pt}
	\centering
	\vspace{-50pt}
	\begin{tikzpicture}[scale=1,
	cube/.style={very thick,black},
	grid/.style={very thin,gray},
	axis/.style={->,black,ultra thick}]
	\draw (0,0,0) -- (3.5,0,0) node[anchor=west]{$y$};
	\draw (0,0,0) -- (0,3.5,0) node[anchor=north west]{$z$};
	\draw (0,0,0) -- (0,0,3.5) node[anchor=west]{$x$};
	
	\draw[cube,fill=white!5] (0,0,0) -- (0,2,0) -- (2,2,0) -- (2,0,0) -- cycle;
	\draw[cube,fill=white!5] (0,0,0) -- (0,2,0) -- (0,2,2) -- (0,0,2) -- cycle;
	\draw[cube,fill=white!5] (0,0,0) -- (2,0,0) -- (2,0,2) -- (0,0,2) -- cycle;
	\foreach \x in {0,1,2}
	\foreach \y in {0,1,2}
	\foreach \z in {0,1,2}{
		
		\ifthenelse{  \lengthtest{\x pt < 2pt}  }
		{
			\draw [black]   (\x,\y,\z) -- (\x+1,\y,\z);
		}
		{
		}
		
		\ifthenelse{  \lengthtest{\y pt < 2pt}  }
		{
			\draw [black]   (\x,\y,\z) -- (\x,\y+1,\z);
		}
		{
		}
		\ifthenelse{  \lengthtest{\z pt < 2pt}  }
		{
			\draw [black]   (\x,\y,\z) -- (\x,\y,\z+1);
		}
		{
		}
		\shade[ axis,ball color = black!80] (\x,\y,\z) circle (0.06cm); 
		
	}
	
	\end{tikzpicture}
	
	\caption{Blocks of  Superblocks placed on the integral points of a Cube }
	\vspace{-30pt}	
	\label{fig:arrange5-2}
\end{wrapfigure}
\paragraph{}

Our scheme has borrowed the idea of the geometric arrangement of elements on a three-dimensional cube from Kesh~\cite{kesh2017adaptive}. Kesh in his paper used the idea of geometric arrangements of elements on high dimensional cubes to come up with $(2,m,c \cdot m^{1/(t-2^{-1})},t)$-scheme for $t \ge 2$. We have also used the idea of dividing the universe into blocks and superblocks from Radhakrishnan {\em et al.}~\cite{radhakrishnan2001explicit1}. Baig and Kesh~\cite{baig2018two} used the combination of the ideas mentioned above to come up with a tight explicit adaptive scheme for $n=3$. Baig {\em et al.}~\cite{DBLP:conf/iwoca/BaigKS19} used a similar idea to map elements on a square grid to come up with an improved scheme for $n=4$.  In this section, we use this geometrical technique to map the blocks of elements from superblocks to the integral point of a three-dimensional cube of side length $x$, as shown in Figure~\ref{fig:arrange5-2}. 

\paragraph{}
We divide the universe $\mathcal{U}$ of size $m$ into blocks and superblocks similar to the Radhakrishnan {\em et al.}~\cite{radhakrishnan2001explicit1}. We divide the universe into blocks of size $y$, so the number of blocks will be $m/y$. We then collect $x^{3}$ consecutive blocks to form a superblock of size $x^{3}y$. So we will have $m/x^{3}y$ superblocks of size $x^{3}y$.\\  
\\			
\textbf{Table {$T$}}\\ 
This table consists of one bit of space for each block. So the size of Table $T$ is $m/y$ bits.\\
\\
\textbf{Table {$T_1$}}
\\
Table $T_1$ is arranged in a three-dimensional cube of side $x$. So in this cube, we have $x^{3}$ integral points. Each integral point on or inside the cube contains a block of size $y$. So the size of table $T_1$ is $x^{3}y$. Since the size of each superblock is $x^{3}y$, all the blocks belonging to a superblock can be mapped on or inside the integral point of the cube. All other superblocks can be thought of as superimposed over each other in the cube. So each point in the cube or table $T_1$ is shared by blocks of $m/x^{3}y$ superblocks.\\
\\
\textbf{Table {$T_0$}}\\
While discussing the structure of table $T_1$, we saw that each superblock is mapped on a three-dimensional cube in such a way that all of them are superimposed. Now, for the $n$\textsuperscript{th} superblock, we first draw a family of lines in the bottom-most layer of the cube in the $XY$-plane with slope $1/n$ in such a way that all the integral points are covered by the lines. 
 \begin{figure}[t]
 	\centering
 	\captionsetup{justification=raggedright}  
 	\begin{minipage}{.45\textwidth}
 		\centering
 		\vspace{-.8cm}
 		\begin{tikzpicture}[scale=0.42]
 		\draw[step=1cm,gray,very thin] (0,0) grid (6,6);
 		\draw (0,0) -- (6,1.2) node[anchor=north west]{L}; 			
 		\draw [] (0,-1)-- (5,-1);
 		\draw [] (-1,0)-- (-1,6);
 		\draw [] (7,0)-- (7,1.2);
 		\draw [] (5.15,0)-- (5.15,1);
 		\draw (2.5,-1.5) node[anchor=north west] {n};
 		\draw (-1.75,3) node[anchor=north west] {x};
 		\draw (0,0) node[anchor=north west] {o};
 		\draw (7,0.75) node[anchor=north west] {y};
 		\draw (3,1.5) node[anchor=north west] {w};
 		\draw (5.3,0.75) node[anchor=north west] {1};
 		\end{tikzpicture}
 		\caption{A line with slope $1/n$ in the bottom most layer of the cube} 
 		\label{fig:lineslope}
 	\end{minipage}%
 \hspace{.2cm}
 	\begin{minipage}{.45\textwidth}
 		\centering
 		
 		\begin{tikzpicture}[scale=0.45]
 		
 		\draw[step=1cm,gray,very thin] (0,0) grid (6,6);
 		\draw (0,0) -- (5,1) node[anchor=north]{L};
 		\draw (0,1) -- (5,2) node[anchor=south]{M};
 		\draw (0,0.8) -- (1,1) node[anchor=north west]{};
 		\draw (0,0.6) -- (2,1) node[anchor=north west]{};
 		\draw (0,0.4) -- (3,1) node[anchor=north west]{};
 		\draw (0,0.2) -- (4,1) node[anchor=north west]{};
 		\draw (0,0) -- (6,0) node[anchor=north west]{X};
 		\draw (0,0) -- (0,6) node[anchor=north east]{Y};
 		\end{tikzpicture}
 		\caption{Figure showing number of lines drawn between two same slope lines} 
 		\label{fig:numlineslope}
 	\end{minipage}
 \end{figure}

\begin{lemma}
	\label{linescount}
	The number of lines passing through all the integral points of a square grid with slope $1/n$ is $2x+(n-1)(x-1)-1$, where $x$ is the length of the square grid.
\end{lemma}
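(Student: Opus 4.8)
The plan is to parametrize the lines by a single invariant and thereby reduce the lemma to counting integers in a range. Place the lattice points of the grid at coordinates $(i,j)$ with $i,j \in \{0,1,\dots,x-1\}$, the $i$-axis horizontal and the $j$-axis vertical. A line of slope $1/n$ advances $n$ units horizontally for each unit vertically, so two lattice points $(i_1,j_1)$ and $(i_2,j_2)$ lie on a common line of slope $1/n$ if and only if $i_1 - n j_1 = i_2 - n j_2$. Hence the lines of slope $1/n$ meeting the grid are in bijection with the distinct values taken by the quantity $c = i - nj$ over the grid, and I would prove the lemma by counting these values.

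First I would pin down the range of $c$. Since $i,j \in \{0,\dots,x-1\}$, the quantity $c = i - nj$ attains its maximum $x-1$ at $(x-1,0)$ and its minimum $-n(x-1)$ at $(0,x-1)$, so every achievable value lies in the integer interval $[-n(x-1),\,x-1]$, which contains $(n+1)(x-1)+1$ integers. A direct expansion gives $(n+1)(x-1)+1 = 2x+(n-1)(x-1)-1$, so it suffices to show that every integer in this interval is actually attained by some lattice point of the grid.

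The heart of the argument, and the step I expect to be the main obstacle, is this no-gap (contiguity) claim. I would argue it band by band: for a fixed row $j$, as $i$ runs over $\{0,\dots,x-1\}$ the value $c = i - nj$ runs through the $x$ consecutive integers $\{-nj,\,-nj+1,\,\dots,\,-nj+x-1\}$, and passing from row $j$ to row $j+1$ shifts this block of $x$ consecutive integers downward by exactly $n$. Consecutive blocks therefore overlap or abut precisely when $n \le x$ (the regime depicted in Figures~\ref{fig:lineslope} and~\ref{fig:numlineslope}, where the horizontal run $n$ of a line fits inside the side $x$), and in that case the union over all rows is the full interval $[-n(x-1),\,x-1]$ with no missing value. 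This is exactly what Figure~\ref{fig:numlineslope} illustrates: between two consecutive left-edge lines $L$ and $M$ through $(0,j)$ and $(0,j+1)$, carrying $c=-nj$ and $c=-n(j+1)$, lie $n-1$ further lines of slope $1/n$ passing through interior lattice points.

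Finally I would assemble the count, organizing it to match the stated form. The $x$ lines through the bottom-edge points $(0,0),\dots,(x-1,0)$ give $c = 0,\dots,x-1$; the $x-1$ lines through the left-edge points $(0,1),\dots,(0,x-1)$ give $c = -n,\dots,-n(x-1)$, together contributing the term $2x-1$; and each of the $x-1$ gaps between consecutive left-edge lines supplies the $n-1$ intermediate lines guaranteed by the contiguity step, contributing $(n-1)(x-1)$. Summing these and invoking the distinctness of the values of $c$ established above yields $2x+(n-1)(x-1)-1$, as claimed.
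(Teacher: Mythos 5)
Your proof is correct and ultimately lands on exactly the paper's decomposition ($x$ lines through the bottom-edge points, $x-1$ through the left-edge points, and $n-1$ intermediate lines in each of the $x-1$ gaps); the invariant $c = i - nj$ is a clean formalization of what the paper reads off Figure~\ref{fig:numlineslope}, so the approach is essentially the same. The one substantive thing you add is the explicit hypothesis $n \le x$ required for the contiguity step --- without it the bands of attained $c$-values become disjoint, the count drops to $x^2$, and the formula $2x+(n-1)(x-1)-1$ overcounts --- a condition the paper leaves implicit but which is harmless there, both because the lemma is only invoked as an upper bound $c\cdot nx$ and because in the final parameter choice $n \le m/(x^{3}y) = m^{1/11} < m^{3/11} = x$.
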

\begin{proof}
	\vspace{-.23cm}
	As shown in Figure~\ref{fig:numlineslope}, if the slope of the line $M$ and $L$ is ${1}/{n}$, then between them, there can be only $n-1$ lines of slope $1/n$ passing through integral points of the grid. So the total number of lines that we can draw with slope $1/n$ is $x$ lines from integral points on $X$-axis, $x-1$ lines from the integral points on $Y$ axis and $(n-1)(x-1)$ lines between lines from the integral points on $Y$-axis. So we have $2x+(n-1)(x-1)-1$ lines with slope $1/n$. 
\end{proof}
  
\begin{wrapfigure}{R}{0.4\textwidth}
	\begin{center}
		\vspace{-20pt}
		\includegraphics[width=0.4\textwidth]{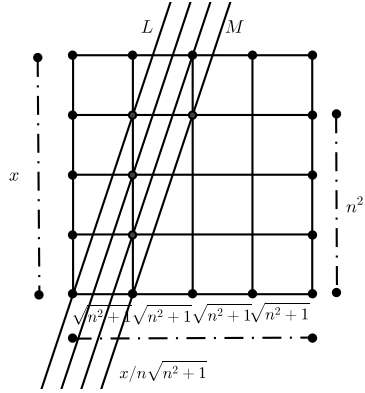}
		\vspace{-25pt}
	\end{center}
	\caption{A Slice Belonging to $n$\textsuperscript{th} Superblock}
	\label{slice}
\end{wrapfigure}
Using Lemma~\ref{linescount}, we can say  that the total number of lines with slope $1/n$ in the bottom-most layer can be $c \cdot nx$,  
where $c$ is a constant, and $x$ is the side of the cube. Now, we cut slices of the cube along these lines and perpendicular to $XY$-plane. So the total number of slices for the $n$\textsuperscript{th} superblock will be equal to the number of lines drawn in the bottom-most layer of the cube for the $n$\textsuperscript{th} superblock, i.e $c \cdot nx$. All the slices have a height equal to the length of the cube, i.e $x$. Let us now calculate the maximum width of a slice of slope $1/n$. Width of the slice formed by line segment $OL$ as shown in Figure~\ref{fig:lineslope} can be calculated to be $ {x}/{n} \sqrt{1+n^{2}}$. We can see from  Figure~\ref{fig:lineslope}  that all other slices with this slope will have width less than or equal to $ {x}/{n} \sqrt{1+n^{2}}$. So a slice belonging to $n$\textsuperscript{th} superblock will have length $x$ and width less than equal to $ {x}/{n} \sqrt{1+n^{2}}$. 
\newline We now draw a family of lines in all the slices of all the superblocks. For the slices belonging to the $n$\textsuperscript{th} superblock, we draw lines with slope
${n^{2}}/{\sqrt{1+n^{2}}}$ as shown in Figure~\ref{slice}. The lines are drawn in such a way that all the integral points are covered. Let us now calculate the maximum number of lines drawn on a slice belonging to the $n$\textsuperscript{th} superblock. We can see from Figure~\ref{slice} that the total number of lines drawn from $Z$-axis is equal to $x$. Also, number of integral points on the width of the slice is equal to $ {x}/{n} \sqrt{1+n^{2}}/\sqrt{1+n^{2}}= {x}/{n}$. So we can draw $\frac{x}{n}$  lines through those integral points on the width of the slice. Now from Figure~\ref{slice}, we can see that the number of lines that can pass between two consecutive integral points on the width of the slice is $n^{2}-1$. So the total number of lines drawn on this slice with slope ${n^{2}}/{\sqrt{1+n^{2}}}$ is equal to $x+ {x}/{n}+ n^{2}\cdot {x}/{n}$ i.e $c \cdot nx$. We say that a slice is having slope $1/n$ if it's projection on the $XY$-plane has slope $1/n$. Now let us bound the total number of lines drawn on slices whose projections on the $XY$-plane makes slope $1/n$. The total number of lines should be less than the product of the number of lines drawn on a slice of a maximum width of slope $1/n$ and the total number of slices of slope $1/n$. So the total number of lines drawn for the $n$\textsuperscript{th} superblock is less than $c_{1}\cdot n^{2}x^{2}$. We need to sum this for all the superblocks to get the total number of lines drawn. So the total number of lines drawn is  
\begin{equation}
\sum_{i=1}^{m/x^{3}y} c_{1}\cdot i^{2} (x)^{2} \le c \cdot \frac{m^{3}}{x^{7}y^{3}}.
\end{equation}
For each line drawn in a slice, we have a block of space in table $T_0$. So the total size of table $T_0$ is $c \cdot {m^{3}}/{x^{7}y^{2}}$ bits. Summing up the space taken by tables $T, T_0$ and $T_1$, we get the following equation:

\begin{equation}
S(x,y)= x^{3}y+C \cdot (\frac{m^{3}}{x^{7}y^{2}})+\frac{m}{y} .
\end{equation}

Choosing $x= m^{3/11}$ and $y= m^{1/11}$, we get space taken by our data structure to be $\mathcal{O}({m^{10/11}})$. 

Now we will prove the following lemma:

\begin{lemma}
	No three lines passing through an integral point of the cube lies in the same plane.
\end{lemma}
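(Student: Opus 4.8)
The plan is to translate the geometric statement into a purely linear-algebraic one. Three distinct lines that all pass through a common point $P$ lie in a common plane if and only if their direction vectors span a space of dimension at most two, that is, if and only if the three direction vectors are linearly dependent. So I would first pin down the direction vector of a generic line of the $n$\textsuperscript{th} superblock, and then show that any three such vectors coming from three different superblocks are linearly independent.

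First I would compute the direction vector. A slice of the $n$\textsuperscript{th} superblock is a vertical plane whose projection onto the $XY$-plane is a line of slope $1/n$; hence its width direction in $3$-space is proportional to $(n,1,0)$ and its height direction is the $z$-axis $(0,0,1)$. Inside the slice we draw lines of slope $n^{2}/\sqrt{1+n^{2}}$, meaning that advancing one unit of actual width raises the height by $n^{2}/\sqrt{1+n^{2}}$. Writing the three-dimensional displacement as the normalised width contribution plus the height contribution and clearing the common factor $\sqrt{1+n^{2}}$, the direction vector of every line of the $n$\textsuperscript{th} superblock comes out proportional to $(n,1,n^{2})$. In particular all lines of a single superblock are parallel, so a given integral point carries at most one line from each superblock; consequently three distinct concurrent lines must belong to three distinct superblocks $n_{1},n_{2},n_{3}$.

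It then remains to check that $(n_{1},1,n_{1}^{2})$, $(n_{2},1,n_{2}^{2})$, $(n_{3},1,n_{3}^{2})$ are linearly independent whenever $n_{1},n_{2},n_{3}$ are distinct. Their coplanarity is governed by the determinant
\[
\begin{vmatrix} n_{1} & 1 & n_{1}^{2} \\ n_{2} & 1 & n_{2}^{2} \\ n_{3} & 1 & n_{3}^{2} \end{vmatrix} = -\,(n_{2}-n_{1})(n_{3}-n_{1})(n_{3}-n_{2}),
\]
which, after swapping the first two columns, is (up to sign) the Vandermonde determinant in $n_{1},n_{2},n_{3}$. Since the superblock indices are distinct positive integers, every factor is nonzero, so the determinant does not vanish, the three direction vectors are independent, and the three concurrent lines cannot lie in a common plane.

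I expect the only genuine obstacle to be the first step: correctly converting the two-stage slope prescription into a single three-dimensional direction vector. One must be careful that the in-slice slope is measured against the true (Euclidean) width of the slice rather than against an $XY$-coordinate, which is exactly why the factor $\sqrt{1+n^{2}}$ appears and then cancels to leave the clean vector $(n,1,n^{2})$. Once that vector is established, the remaining argument is the standard nonvanishing of a Vandermonde determinant and needs no further computation.
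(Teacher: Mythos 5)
Your proof is correct, and its skeleton matches the paper's: both arguments begin by noting that all lines of a single superblock are parallel, so three concurrent lines must come from three distinct superblocks $n_1,n_2,n_3$, and both reduce the question to the data consisting of the $XY$-slope $1/n_i$ and the in-slice rise $n_i^2$. Where you genuinely differ is in how the conclusion is reached. The paper stops at the observation that the three lines ``go up in $Z$'' by $n_1^2$, $n_2^2$ and $n_3^2$ and simply declares them non-coplanar; as written this is not a complete argument, since three concurrent lines with distinct $z$-rises could still be coplanar if their direction vectors happened to be linearly dependent. You supply exactly the missing step: the direction vector of a line of the $n$-th superblock is proportional to $(n,1,n^2)$ --- your care in measuring the in-slice slope against the true Euclidean width, so that the factor $\sqrt{1+n^2}$ cancels, is the one delicate point and you handle it correctly --- and the determinant of the three resulting vectors is, up to sign, the Vandermonde determinant $(n_2-n_1)(n_3-n_1)(n_3-n_2)$, which is nonzero for distinct superblock indices. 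So your write-up is not merely a restyling of the paper's proof but a completion of it; the cost is only the short direction-vector computation, and what it buys is that the final sentence actually follows.
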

\begin{proof}
	From the construction of the table, we can see that all the lines drawn in the cube for a given superblock are parallel to each other. So the lines which pass through the same integral point of the cube belong to the different superblocks. Let us consider the three arbitrary superblocks to which our lines belong. Without loss of generality let us say that the projection of these slices on the $XY$-plane makes angle  $ {1}/{n_1}, {1}/{n_2}$ and ${1}/{n_3}$ with the $X$-axis. Our lines lie completely in the slices belonging to their superblock. While drawing lines in the slices for the first, second, and third superblock, we are going up in $Z$ direction by $n_1^{2}, n_2^{2}$, and $n_3^{2}$. Hence our lines cannot lie in the same plane.
	
\end{proof}

\section{The Adaptive Scheme for Five Elements}
In this section we will present our $(5,m,\mathcal{O} (m^{10/11}),2)$-scheme.
\subsection{Our Data Structure}
Our data structure has three tables $T,T_0$, and $T_1$. Structure of these tables have been discussed earlier, and we have seen that size of each table is $\mathcal{O} (m^{10/11})$. So the size of our data structure is $\mathcal{O} (m^{10/11})$.
\paragraph{}
 From the structure of tables, we may draw the following conclusion. In general, two blocks having elements should not map at the same location in table $T_1$ or $T_0$. Otherwise, we may make a mistake on the query belonging to these blocks. Further, if the block having element is mapped in table $T_1$ or $T_0$, then no other block should be sent to that table whose position is matched with the block having an element. So if a block having an element on a line is mapped to table $T_0$, then all other blocks lying on that line should be sent to table $T_1$. As for each line, we have only one block of space in table $T_0$.
 On the contrary, if the block having an element from a line is sent to table $T_1$, then other blocks lying on the line which contains this block can be sent to table $T_0$ or $T_1$. The blocks which are not having any elements given to be stored can be mapped at the same location in table $T_1$ or $T_0$. Further, in the rest of the paper, whenever we say the line passing through a block or the line having a block, we always mean the line drawn in the superblock to which the block belongs.  

\subsection{The Query Scheme}
 Given a query element, we find the block and superblock to which it belongs. Further, we query the first table, if the first table returns zero, we query to table $T_0$ else we query to table $T_1$. We say that element is part of the set to be stored if and only if the last query returns one.

\subsection{The Storage Scheme} 
\label{bib1}
In this section, we talk about the way bits of tables are set to store the subset of size at most five from a universe of size $m$. We divide the storage scheme into various cases depending upon the way blocks having elements are distributed on the line belonging to their superblock. To generate all the cases, first of all, we partition the number five; then we put those many elements into different superblocks. Further, the positions of the blocks having elements on the line belonging to their superblocks are considered. While handling cases, we see the intersections of the lines, which contains blocks having elements given to be stored. We then decide which block to send to table $T_0$ and which to $T_1$. As in our data structure, we always send a block to either table $T_0$ or $T_1$, and we store its bit vector there, so we will always assume that elements which are given to be stored lies in the different block. Proving the results for elements belonging to different blocks proves the result when many elements belong to the same block. In this section, we will discuss few cases. The rest of the cases can be generated and handled similarly, and are mentioned in~\ref{bib}. \\
\\
\textbf{Case 1.}
If all the elements of $S$ lie in one superblock, then we send the blocks having elements to table $T_1$ and all the empty blocks to table $T_0$.\\
\\     
\textbf{Case 2.}
If four elements $S_1=\{n_1,n_2,n_3,n_4\}$ lie in one superblock and one element $S_2=\{n_5\}$ in other superblock then we can have two cases, either the block containing the element $n_5$ coincides with one of the block containing element from $S_1$ in table $T_1$ or it does not coincides. So if the block containing the element $n_5$ coincides with one of the blocks containing an element from $S_1$, then we send the block having the element $n_5$ to table $T_1$ and send the block from which it was coinciding to table $T_0$. All other blocks of superblock which contain elements from  $S_1$ are sent to table $T_1$. All other blocks of superblock which contain the element $n_5$  are sent to table $T_0$. Rest all the empty blocks are sent to table $T_0$. On the other hand, if the block containing element $n_5$ do not coincide with any of the block having element from $S_1$ in table $T_1$, then we send all the blocks having elements from $S_1$ and $S_2$ to table $T_1$, and rest all the empty blocks to table $T_0$.\\ 
\\
\textbf{Case 3.}
If three elements $ S_1= \{n_1,n_2,n_3\}$ lie in one superblock and two elements $S_2=\{n_4,n_5\}$ in other superblock then we store according to following scheme.\\
\\
\textbf{Case 3.1}
All the blocks to which elements from $S_1$ belong lies on the same line of their superblock. From here onwards, whenever we say line passing through a block or blocks lying on a line, we mean the line drawn in the superblock to which these blocks belong.\\
\\        
\textbf{Case 3.1.1}        
Two blocks to which elements from $S_2$ belong coincides with the blocks corresponding to the elements from $S_1$ in table $T_1$. In this case, we send the blocks having elements from $S_2$ to table $T_0$. Further, we send empty blocks lying on the lines to which elements from $S_2$ belongs to table $T_1$. We send all the blocks which contain elements from $S_1$ in table $T_1$. Finally, we send the rest of the empty blocks to table $T_0$.\\  
\\                 
\textbf{Case 3.1.2}

Only one block which contains an element from $S_2$  coincides with the block corresponding to the elements from $S_1$ in table $T_1$. In this case, we send all the blocks which contain elements from $S_1$ to table $T_1$. We send the coinciding block of the element from $S_2$ to table $T_0$ and the rest of the blocks, which lies on the line containing this block to table $T_1$. If after this other nonempty block having an element from $S_2$ is still unassigned, then we send it to table $T_1$, and all the empty blocks lying on the line containing this block to table $T_0$ . Rest all the empty blocks are sent to table $T_0$ .\\   
\\        
\textbf{Case 3.1.3}
None of the blocks which contain an element from $S_2$ coincides with the block, which contains an element from $S_1$ in table $T_1$ . In this case, we send all the nonempty blocks to table $T_1$ and all the empty blocks to table $T_0$. \\  
\\            
\textbf{Case 3.2}
Two blocks that contain elements from $S_1$ lies on the same line,  and another lie on a different line.\\
\\
\textbf{Case 3.2.1}
All the blocks which contain an element from $S_1$ lies in the same slice. From here onward, whenever we say blocks belonging to a slice, we mean the slice drawn in a superblock to which these blocks belong.\\
\\             
\textbf{Case 3.2.1.1}
All the blocks which contain elements from  $S_2$ coincides with blocks which contain elements from $S_1$ in table $T_1$. In this case, we can use the assignment made in Case 3.1.1. \\       
\\            
\textbf{Case 3.2.1.2} Only one block which contains an element from $S_2$ coincides with the block, which contains an element from $S_1$ in table $T_1$. In this case, we can use the assignment made in Case 3.1.2\\   
\\            
\textbf{Case 3.2.1.3}
None of the blocks which contain an element from $S_2$ coincides with the block, which contains an element from $S_1$ in table $T_1$. This case is the same as Case 3.1.3.\\ 
\\                    
\textbf{Case 3.2.2} Two blocks that contain elements from $S_1$ lies in a slice and another block that contains an element from $S_1$ in another slice.\\
\\
\textbf{Case 3.2.2.1} Both the blocks which contain elements from  $S_2$ coincides with blocks which contain elements from $S_1$ in table $T_1$. If the coinciding blocks lie in the same slices as that of two blocks that contain elements from $S_1$, then we send both the coinciding blocks which contain the elements from $S_2$ to table $T_0$. Further, we send all the empty blocks lying on the lines which contain these blocks to table $T_1$. Two blocks that contain the elements from $S_1$ and lying in the same slice are sent to table $T_1$. The remaining block which contains the element is sent to table $T_0$ and all the empty blocks lying on the line containing this block to table $T_1$. Rest all the empty blocks are sent to table $T_0$.\newline Now consider the case in which the coinciding blocks which contain the elements from $S_2$ lies in different slices of the superblock, which contains the elements from $S_1$. In this case, we send the two blocks which contain the elements from $S_1$ lying in a slice to table $T_1$. The rest of the block, which contains the element from $S_1$ lying in the other slice, is sent to table $T_0$, and the empty blocks which lie on the line containing this block are sent to table $T_1$. One of the blocks which contains an element from $S_2$ and is lying in the slice containing two elements from $S_1$ is sent to table $T_0$  and the rest of the blocks on this line is sent to table $T_1$. If after this assignment, other block having the element from $S_2$ is still unassigned, then we send it to table $T_1$, and the empty blocks on the line containing this block to table $T_0$. Rest all the empty blocks are sent to table $T_0$.\\   
\\
\textbf{Case 3.2.2.2}
Only one block which contains an element from $S_2$ coincides with a block that contains an element from $S_1$ in table $T_1$. Let us first consider the case where coinciding block having element form $S_2$ lies in the slice, which contains two blocks having elements from $S_1$. Without loss of generality, let us say that the blocks having elements $n_1$ and $n_2$ lies in the same slice and the block having the element $n_4$ coincide with the block having the element $n_1$. In this case, we send the block having the element $n_4$ to table $T_0$, and all the blocks on the line containing this block is sent to table $T_1$.\newline 
If the block which contains an element $n_3$ lies on the line which contains the block having the element $n_4$, then we send the block having $n_3$ to table $T_0$, and empty blocks of the line which contains block having $n_3$ to table $T_1$. Now, if the block having the element $n_5$ is still unassigned, then we send the block having the element $n_5$ to table $T_0$, and empty blocks on the line containing this block to table $T_1$. We send the block having the element $n_2$ to table $T_0$, and all the blocks which lie on the line containing this block table $T_1$. Rest all the empty blocks are sent to table $T_0$.\newline Further, let us consider the case where the block that contains the element $n_3$ does not lie on the line, which contains the block having the element $n_4$. In this case, we send the block having $n_3$ to table $T_1$, and the empty blocks on the line containing this block are sent to table $T_0$. Now, if the block having the element $n_5$ is still unassigned, then we send it to table $T_0$, and the empty blocks lying on the line containing this block to table $T_1$. We send the block having the element $n_2$ to table $T_0$, and all the blocks which lie on the line containing this block table $T_1$. Rest all the empty blocks are sent to table $T_0$.\newline
Now we are left with the case where coinciding block of $S_2$ having an element $n_4$ coincides with the block having the element $n_3$. In this case, we send the block having the element $n_3$ to table $T_0$ and empty block, which lies on the line containing this block to table $T_1$. We send the block having the element $n_4$ to table $T_1$ and empty blocks which lies on the line containing this block to table $T_0$.\newline Now we see the position of the block having the element $n_5$. If the block having the element $n_5$ lies on the line, which contains block having the element $n_3$, then we send the block having the element $n_5$ to table $T_0$.  Further, we send the empty blocks of the line, which contains block having the element $n_5$ to table $T_1$. Now, consider the case where the line containing the block having the element $n_5$ passes through one of the blocks having the element $n_1$ or $n_2$. Without loss of generality, let us say that the line containing the block having the element $n_5$ passes through the block having the element $n_1$. In this case, we send the block having the element $n_1$ to table $T_0$ and rest all the blocks lying on this line to table $T_1$. Rest all the empty blocks are sent to table $T_0$. If the line which contains the block having element $n_5$ does not pass through the block block having element $n_1$ or $n_2$, in this case we can send both the blocks having elements $n_1$ and $n_2$ to table $T_1$. Rest all the empty blocks are sent to table $T_0$.\newline Now consider the case where block having the element $n_5$ does not lie on the line, which contains block having the element $n_3$. In this case, we send the block having the element $n_5$ to table $T_1$ and all the empty blocks lying on the line containing this block to table $T_0$. Blocks having elements $n_1$ and $n_2$ are sent to table $T_1$, and rest all the empty blocks are sent to table $T_0$.\\
\\
\textbf{Case 3.2.2.3}
None of the blocks which contain elements from $S_2$ coincide with blocks having elements from $S_1$ in table $T_1$. This case is the same as Case 3.1.3. \\
\\                
\textbf{Case 3.3}
All the blocks which contain elements from $S_1$ lies on the different lines.\\
\\
\textbf{Case 3.3.1}
Both the blocks having elements $n_4$ and $n_5$ coincides with the blocks having elements from $S_1$ in table $T_1$. Without loss of generality, let us say that block having element $n_1$, coincides with the block having the element $n_4$, and the block having the element $n_2$ coincide with the block having the element $n_5$. In this case, we send the blocks having elements $n_1,n_2$ and $n_3$ to table $T_0$ and all the empty blocks lying on the lines which contain these blocks to table $T_1$. Further, we send the blocks having the element $n_4$ and $n_5$ to table $T_1$. Rest all the empty blocks are sent to table  $T_0$.  \\
\\    
\textbf{Case 3.3.2}
Only one of the block having element say $n_4$ from $S_2$ coincides with blocks having element from $S_1$ in table $T_1$. Without loss of generality, let us say that block having the element $n_1$ coincides with the block having the element $n_4$. Similar to the last case, in this case also, we send the blocks having elements $n_1,n_2$ and $n_3$ to table $T_0$, and all the empty blocks lying on the lines which contain these blocks to table $T_1$. Further, We send the block having element $n_4$ to table $T_1$. We send the block having element $n_5$ to table $T_0$, and all the empty blocks lying on the line containing this block to table $T_1$. Rest all the empty blocks are sent to table $T_0$.\\       
\\        
\textbf{Case 3.3.3}
None of the blocks which contain an element from $S_2$ coincides with the block, which contains an element from $S_1$. This case is the same as Case 3.1.3.\\

\paragraph{\textbf{Correctness}} 
The correctness of the scheme relies on the fact that blocks having the elements do not coincide in table $T_1$ or in table $T_0$. Also, the blocks which are not having the elements are not sent to the place where block having elements are mapped.\\
\\
We summaries the conclusion of this section as follows.

\begin{theorem}
	\label{thm:final3}
	There is a two probe explicit adaptive scheme which stores an arbitrary subset $\mathcal{S}$ of size at most five from a universe $\mathcal{U}$ of size \emph{m} and uses $\mathcal {O}(m^{10/11})$ bits of space.
\end{theorem}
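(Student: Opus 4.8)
The plan is to establish the statement in two independent parts: the space guarantee and the correctness of the scheme. For the space bound I would simply invoke the construction of Section~2. The three tables have total size $S(x,y) = x^{3}y + C\cdot m^{3}/(x^{7}y^{2}) + m/y$, whose three summands are the sizes of $T_1$, of $T_0$ (the second summand coming from the line count of \cref{linescount}), and of $T$ respectively. Substituting $x = m^{3/11}$ and $y = m^{1/11}$ makes every summand $\Theta(m^{10/11})$, which yields the claimed $\mathcal{O}(m^{10/11})$ bound. This part is purely a calculation and needs no further argument.

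The substance of the proof is correctness, and here I would first isolate the invariants the storage scheme must preserve, so that the case analysis has a clear target. Recall that a block is routed to $T_1$ exactly when its $T$-bit is $1$ and to $T_0$ otherwise, that each integral point carries one shared block of $T_1$-space (superimposed over all superblocks), and that each line carries one shared block of $T_0$-space. Given the query scheme, correctness for a query $q$ lying in block $B$ reduces to two conditions on the stored bit vectors: (i) no two distinct element-bearing blocks are routed to the same cell, i.e. no two such blocks coincide at an integral point of $T_1$, and no two such blocks on a common line are both sent to $T_0$; and (ii) no empty block is routed to a cell already occupied by an element-bearing block, since such a cell stores a $1$ and would produce a false positive. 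I would check that (i) and (ii) together force the final probe to return $1$ precisely when $q \in S$, which is immediate once they are granted.

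I would then argue that the storage scheme of Section~3.3 maintains (i) and (ii) in every configuration. The cases are generated by first partitioning the five elements among superblocks (the partitions $5$, $4{+}1$, $3{+}2$, $3{+}1{+}1$, $2{+}2{+}1$, $2{+}1{+}1{+}1$, $1{+}1{+}1{+}1{+}1$) and then refining by the geometric position of the element-bearing blocks: whether they share a line, share a slice, and which of them coincide in $T_1$. For each case I would verify that the prescribed assignment sends every element-bearing block to a free cell and that every empty block it displaces, for instance the empty blocks pushed to $T_1$ when an element-bearing block is sent to $T_0$ along its line, lands on a cell containing no element. The two enabling geometric facts are that each line carries exactly one $T_0$-cell (so sending one element-bearing block of a line to $T_0$ forces the remaining blocks of that line to $T_1$, which is exactly condition~(i) for $T_0$), and the lemma asserting that no three lines through an integral point of the cube, which necessarily come from distinct superblocks, lie in a common plane; I expect this non-degeneracy to be precisely what lets the scheme resolve a coincidence in $T_1$ by relocating one block along its own line without creating a fresh coincidence.

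The main obstacle I anticipate is the branching of the case analysis rather than any single hard step: the low partitions such as $3{+}2$ (and, in the appendix, $2{+}2{+}1$ and below) split into many sub-cases according to how the two-element superblock's blocks coincide with the three-element superblock's blocks and whether the latter lie on one line, two lines, or within a single slice. The delicate point throughout is condition~(ii): each time the scheme reroutes empty blocks to $T_1$ in order to vacate a $T_0$-line, one must confirm that those empty blocks do not collide in $T_1$ with an element-bearing block of another superblock. I expect this always to be achievable because at most five element-bearing blocks occur, so only finitely many cells of $T_1$ are ever occupied, and the freedom to route each empty block to either table leaves room to avoid them; making this precise in each sub-case is the bulk of the work and is carried out in Section~3.3 and completed in~\ref{bib}.
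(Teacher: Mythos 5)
Your proposal follows essentially the same route as the paper: the space bound is the same calculation with $x=m^{3/11}$, $y=m^{1/11}$, and your correctness conditions (i) and (ii) are precisely the invariants the paper states in its correctness paragraph (element-bearing blocks never coincide in $T_1$ or $T_0$; empty blocks are never mapped onto a cell holding an element-bearing block), verified by the same exhaustive case analysis organized by partitions of five across superblocks and refined by line/slice incidences and $T_1$-coincidences, using the same two geometric facts (one $T_0$-cell per line, and the no-three-coplanar-lines lemma). The only caveat is that, like the paper, the real content lives in the sub-case verifications, which your outline defers rather than executes.
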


\subsection{Counterexample}

\begin{wrapfigure}{R}{0.4\textwidth}
	\begin{center}
		\vspace{-10pt}
		\includegraphics[width=0.45\textwidth]{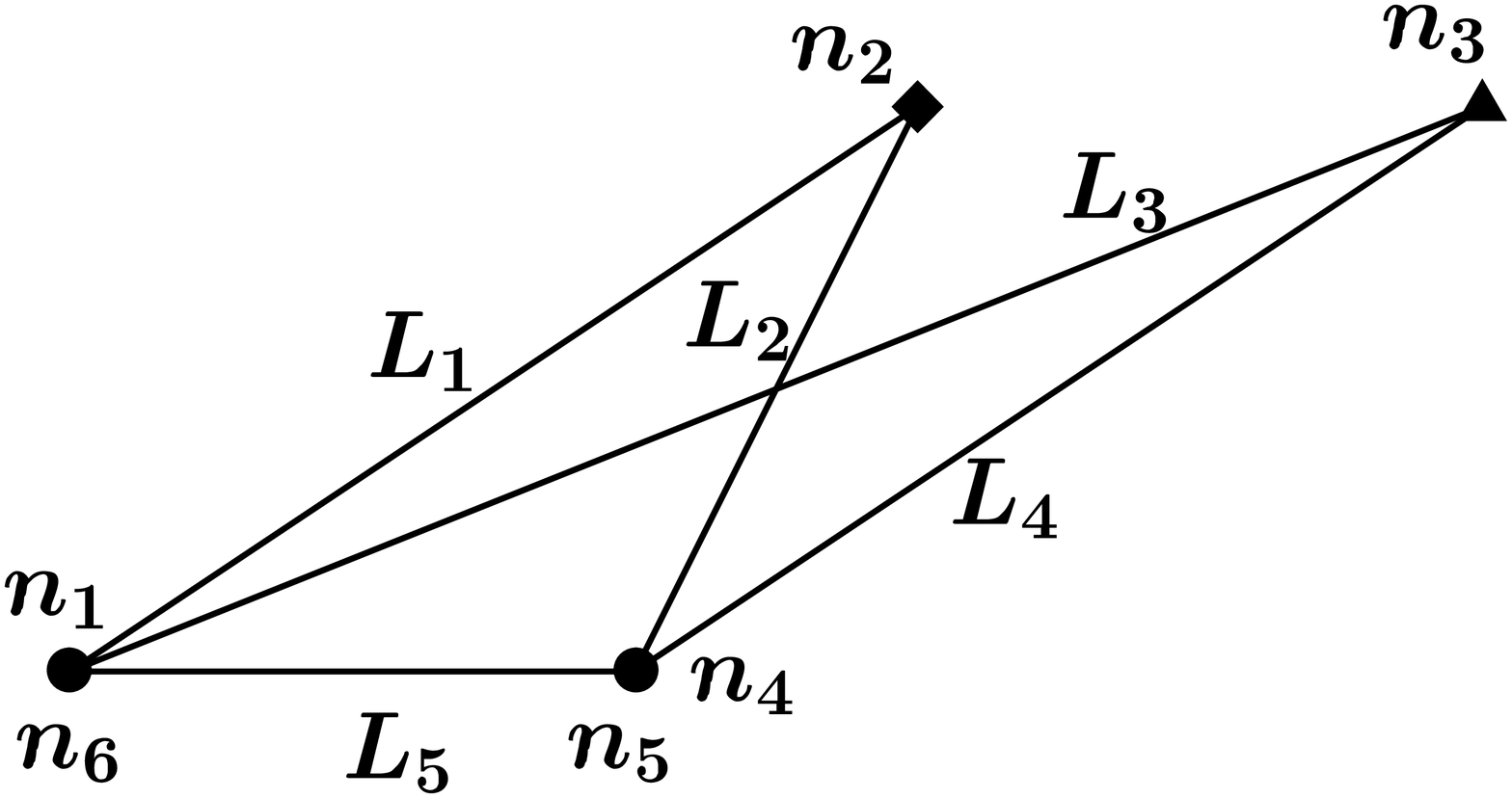}
		\vspace{-25pt}
	\end{center}
	\caption{A counterexample for a six elements subset}
	\vspace{-20pt}
	\label{count:final8}
\end{wrapfigure}
In this subsection, we will show that the above-mentioned scheme can not store a subset $\mathcal{S}$ of size six. Let us consider a subset  $\mathcal{S} =\{n_1,n_2,n_3,n_4,n_5,n_6\}$ of six elements from a universe $\mathcal{U}$ of size $m$. Further, let us consider that all these elements belong to the different blocks. As shown in the Figure~\ref{count:final8}, blocks having the elements $n_1,n_2,n_3$ and $n_4$ lies on the line $L_1,L_2,L_3$ and $L_4$ respectively. Whereas, blocks having the elements $n_5$ and $n_6$ lies on the line $L_5$. Now, we claim that if we store the configuration shown in the Figure~\ref{count:final8} in our {$(5,m,m^{10/11},2)$}-scheme then our query scheme will answer incorrectly.\newline To show that our claim is true, let us consider the block having the element $n_1$. Now, this block can either go to table $B$ or to table $C$. Let us first consider the case where the block having the element $n_1$ goes to table $B$. 
\newline If the block having the element $n_1$ goes to table $B$, then rest all the blocks lying on the line $L_1$ must go to table $C$. Therefore, the block having the element $n_2$ must go to table $B$. Now, rest all the blocks lying on the line $L_2$ must go to table $C$. Therefore, blocks having the elements $n_4$ and $n_5$ must go to table $B$. If the block having the element $n_4$ goes to table $B$, then rest all the blocks lying on the line $L_4$ must go to table $C$. Therefore, the block having the element $n_3$ must go to table $B$. Now, rest all the blocks lying on the line $L_3$ must go to table $C$. Therefore, block having the element $n_6$ must go to table $B$. Since we have already sent the block having the element $n_5$ to table $B$, so our query scheme will answer incorrectly for the queries belonging to these blocks.\newline Now, we are left with the case where the block having the element $n_1$ goes to table $C$. In this case, the block having the element $n_6$ must go to table $B$. Therefore, rest all the blocks lying on the line $L_5$ must go to table $C$. Now, the block having the element $n_4$ must go to table $B$. Therefore, rest all the blocks lying on the line $L_4$ must go to table $C$. It forces the block having the element $n_3$ to table $B$. Therefore, rest all the blocks lying on the line $L_3$ must go to table $C$. However, we have already sent the block having the element $n_1$ to table $C$. So our query scheme will answer incorrectly for the queries belonging to the block having the element $n_1$ and the empty block lying on the line $L_3$ and coinciding with the block having the element $n_1$.
\newline The block containing the element $n_1$ can either go to table $B$ or to table $C$, and our query scheme answers incorrectly in both the cases. Therefore, the above configuration of the elements cannot be stored in the $(5,m,m^{10/11},2)$-scheme.

\section{Conclusion}
In this paper, we have come up with an explicit adaptive $(5,m,\mathcal{O} (m^{10/11}),2)$-scheme, which improves upon the non-explicit scheme by Garg and Radhakrishnan~\cite{garg2014set} and the explicit scheme by Garg~\cite{gthesis} for the given set and query sizes. We have borrowed the idea of the geometrical arrangement of elements on the three-dimensional cube from Kesh \cite{kesh2017adaptive} and the idea of dividing the universe into blocks and superblocks from Radhakrishnan {\em et al.}~\cite{radhakrishnan2001explicit1}. Using these ideas there are improved schemes for the set of size three, four and five. We believe that this idea can be further generalized to improve the existing results for arbitrary subsets of size $n$.		

\bibliographystyle{splncs03}
\bibliography{paper}

\newpage
\appendix
\renewcommand\thesection{Appendix \Alph{section}}
\section{}
\label{bib}
In this section, we discuss rest of the cases of Section ~\ref{bib1}.\\
\\
\textbf{Case 4}
The elements in $S_1 =\{n_1,n_2,n_3\}$  lies in a superblock and the elements $n_4$ and $n_5$ in the different superblocks.\\ 
\\               
\textbf{Case 4.1}
Elements $n_1,n_2$ and $n_3$ lies on a same line.\\
\\                   
\textbf{Case 4.1.1}
Both the elements $n_4$ and $n_5$ coincides with the blocks having elements from $S_1$ in table $T_1$. In this case, we send the blocks having elements $n_4$ and $n_5$ to table $T_0$, and the rest of the empty blocks which lie on the lines containing these blocks to table $T_1$. We send the blocks having elements from $S_1$ to table $T_1$. Rest all the empty blocks are sent to the table $T_0$.\\   
\\                   
\textbf{Case 4.1.2}
Only one of the blocks having an element say $n_4$ coincides with the block having an element from $S_1$ in table $T_1$. Without loss of generality, let us say that block having the element $n_1$ coincides with the block having the element $n_4$. In this case, we send the blocks having elements $n_4$ and $n_5$ to table $T_0$ and all the empty blocks lying on the line containing these blocks to table $T_1$. Now we see whether the line, which contains block having the element $n_5$ passes through the block having element elements from $S_1$ or not. Let us first consider a case where the line which contains block having the element $n_5$ passes through one of the blocks having elements from $S_1$, without loss of generality, let us say it passes through block having the element $n_2$. In this case, we send the block having the element $n_2$ to table $T_0$, and rest all the blocks lying on the line containing this block to table $T_1$. Rest all the empty blocks are sent to table $T_0$. On another hand, if the line which contains block having the element $n_5$ does not pass through any of the block having element from $S_1$, then we send the blocks having elements from $S_1$ to table $T_1$. Rest all the empty blocks are sent to table $T_0$. \\ 
\\
\textbf{Case 4.1.3}
None of the blocks which contains element  $n_4$ or $n_5$ coincides with the blocks which contains an element from $S_1$ in table $T_1$. This case is the same as Case 3.1.3.\\
\\                   
\textbf{Case 4.2}
Two elements say $n_1$ and $n_2$ lies on a same line, and the element $n_3$ lies on a different line.\\
\\                       
\textbf{Case 4.2.1}
Blocks having the elements $n_4$ and $n_5$ coincides with the blocks having elements from $S_1$ in table $T_1$.\\
\\
\textbf{Case 4.2.1.1}
Blocks having the elements $n_4$ and $n_5$ coincides with the block having elements $n_1$ and $n_2$ in table $T_1$. In this case, we send the blocks having elements $n_4$ and $n_5$ to table $T_0$, and all the empty blocks lying on the lines containing these blocks to table $T_1$. Further, we send the blocks having elements $n_1$ and $n_2$ to table $T_1$. We send the block that contains the element $n_3$ to table $T_0$, and the rest of the empty block lying on the line containing this block to table $T_1$. Rest all the empty blocks are sent to the table $T_0$.\\  
\\                           
\textbf{Case 4.2.1.2}
Blocks having elements $n_4$ and $n_5$ coincides with the block lying on the different line, say $n_1$ and $n_3$ in table $T_1$. In this case, we send the blocks having elements $n_4$ and $n_5$ to table $T_1$ and the rest of the empty blocks lying on the line containing these blocks to  table $T_0$ .  We send the blocks having elements $n_1$ and $n_3$ to the table $T_0$ and the rest of the blocks lying on these lines to the table  $T_1$. Rest all the empty blocks are sent to table $T_0$.\\  
\\                           
\textbf{Case 4.2.1.3}
Blocks having element $n_4$ and $n_5$ coincides in table $T_1$. If it coincides with the block having the element $n_1$ or $n_2$, then we send both the block having the element $n_4$ and $n_5$ to table $T_0$, and the rest of the empty block lying on the line containing these blocks to table $T_1$. Further, we send the blocks having the element $n_1$ and $n_2$ to table $T_1$. We send the block that contains element $n_3$ to the table $T_0$, and the rest of the empty block, which lies on the line containing this block to the table $T_1$. Rest all the empty blocks are sent to the table $T_0$.\newline If the blocks containing elements $n_4$ and $n_5$ coincides with the block which contains element $n_3$ in table $T_1$, then we send the block containing element $n_3$ to the table $T_0$, and the rest of the empty block, which lie on the line containing this block to the table $T_1$. Further, we send the block containing $n_4$ to table $T_1$, and the rest of the empty block lying on the line containing this block to the table $T_0$. We send the block having the element $n_5$ to the table $T_0$, and the rest of the empty block lying on the line containing this block to table $T_1$. Now we see whether the line which contains block having the element $n_5$ passes through the block having the element  $n_1$ or $n_2$. Without loss of generality, let us first consider the case where the line which contains block having the element $n_5$ passes through the block having the element $n_1$. In this case, we send the block having the element $n_1$ to table $T_0$, and all the blocks lying on the line containing this block to table $T_1$. Rest all the empty blocks are sent to table $T_0$. If the line which contains block having element $n_5$ does not pass through the blocks having elements $n_1$ or $n_2$, then we send the blocks having elements $n_1$ and $n_2$ to table $T_1$, and rest all the empty blocks to table $T_0$.\\ 
\\                           
\textbf{Case 4.2.2}
Only one block having an element say $n_4$ coincides with the block having the element from $S_1$ in table $T_1$. Let us first consider the case where block having element $n_4$ coincide with the block having element $n_1$ or $n_2$. Without loss of generality, let us say block having the element $n_4$ coincide with the block having the element $n_1$. In this case, we send the block having the element $n_4$ to table $T_1$, and all the empty blocks lying on the line containing this block to table $T_0$. We send the block having the element $n_1$ to table $T_0$, and rest all the blocks lying on the line containing this block to the table $T_1$. Now we see the position of the block having the element $n_5$. If it lies on the line which contains block having the element $n_1$, then we send the block having the element $n_5$ to table $T_0$, and all the empty blocks lying on the line containing this block to the table $T_1$. We send the block having the element $n_3$ to table $T_0$, and all the empty blocks lying on the line containing this block to table $T_1$. Rest all the empty blocks are sent to table $T_0$. On another hand, if the block having the element $n_5$ does not lie on the line, which contains block having the element $n_1$, then we send the block having the element $n_3$ and $n_5$ to table $T_1$. Rest all the empty blocks are then sent to table $T_1$.\newline Now let us consider the case where block having the element $n_4$ coincide with the block having the element $n_3$. In this case, we send the block having the element $n_4$ to table $T_1$, and rest all the empty blocks lying on the line containing this block to table $T_0$. We send the block having the element $n_3$ to table $T_0$, and all the empty blocks lying on the line containing this block to table $T_1$. Now we see the position of the block having the element $n_5$. If the block having the element $n_5$ lies on the line which contains block having the element $n_3$ , then we send the block having the element $n_5$ to the table $T_0$, and all the empty blocks lying on the line containing this block to the table $T_1$. If the block having the element $n_1$ lies on the line which contains block having the element $n_5$, then we send the block having the element $n_1$ to table $T_0$, and rest all the blocks lying on the line containing this block to table $T_1$. Rest all the empty blocks are sent to the table $T_1$. Similar is the case if the block having the element $n_2$ lies on the line, which contains block having the element $n_5$. If the block having the element $n_5$ does not lie on the line which contains block having the element $n_3$, then we send the blocks having the elements $n_1,n_2$ and $n_5$ to table $T_1$, and rest all the empty blocks to table $T_0$.\\  
\\                           
\textbf{Case 4.2.3}    
None of the blocks having the elements $n_4$ or $n_5$ coincides with blocks having elements from $S_1$ in table $T_1$. This case is the same as Case 3.1.3.\\    
\\                   
\textbf{Case 4.3}    
All the elements belonging to $S_1$ lies on the different lines.  In this case, we send all the blocks having elements to table $T_0$, and all the empty blocks to table $T_1$.\\
\\	   				
\textbf{Case 5}
Two elements $S_1=\{n_1,n_2\}$ lies in a superblock other two elements $S_2=\{n_3,n_4\}$ lies in other superblock, and an element $S_3=\{n_5\}$ in a different superblock.\\
\\    
\textbf{Case 5.1}
Elements belonging to $S_1$ lie on the same line, elements belonging to $S_2$ lies on a line.\\
\\        
\textbf{Case 5.1.1}
Two blocks having elements from $S_2$ and $S_3$ coincides with the blocks having elements from $S_1$ in table $T_1$. Without loss of generality, let us say that block having element $n_3$ coincides with the block having the element $n_1$, and the block having the element $n_5$ coincides with the block having the element $n_2$. In this case, we send the block having the element $n_4$ to table $T_0$, and the rest of the block lying on the line containing this block to table $T_1$. We send the block having the element $n_1$ to table $T_0$, and the rest of the block lying on this line to table $T_1$. Further, we send the block having the element $n_5$ to table $T_0$, and the rest of the empty block lying on the line containing this block to table $T_1$. Rest all the empty blocks are sent to table $T_0$.\newline If the blocks from $S_2$ and $S_3$ coincides with the same block having the element from $S_1$, then we send the coinciding block having elements from $S_2$ and $S_3$ to table $T_0$, and the rest of the blocks lying on these lines to table $T_1$. Furthermore, we send the blocks having elements from $S_1$ to table $T_1$. Rest all the empty blocks are sent to table $T_0$.\\ 
\\        
\textbf{Case 5.1.2}
Only one block having the element from $S_2$ or $S_3$ coincides with the block having the element from $S_1$.\\
\\            
\textbf{Case 5.1.2.1} 
Block having an element from $S_2$ coincides with the block having an element from $S_1$. Without loss of generality, let us say that the block having the element $n_3$ coincides with the block having the element $n_1$. Now, we can have two cases, either the block containing element $n_5$ coincides with the block containing element $n_4$ or it does not. \newline If the block containing $n_5$ coincides with the block containing $n_4$, then we send the block containing $n_4$ to table $T_0$, and the rest of the block lying on the line containing this block to the table $T_1$. We send the block containing $n_5$ to table $T_1$, and the rest of the block lying on the line containing this block to table $T_0$. Further, we send the block having the element $n_1$ to table $T_0$, and the rest of the block lying on this line to table $T_1$. Rest all the empty block are sent to table $T_0$.\newline If the block containing $n_5$ does not coincide with the block having the element $n_4$, then we send the block having the element $n_3$ to table $T_0$, and the rest of the block lying on this line to table $T_1$. Now we see the position of the block that contains the element $n_5$ to make the assignment. If the block which contains element $n_5$ lies on the line which contains block having elements from $S_1$, then we send the block having the element $n_5$ to table $T_1$, and rest all the empty blocks lying on the line which contains this block to table $T_0$.  Further, we send blocks having elements from $S_1$ to table $T_1$, and the rest of the empty blocks lying on the line which contains this block to table $T_0$. For the rest of the empty blocks, we send it to table $T_0$. \newline Rest for all other positions of block having element $n_5$, we send block having elements $n_3,n_4$ and $n_5$ to table $T_1$, and all the empty blocks lying on the lines containing these blocks to table $T_0$.  Further, we send the block having the element $n_1$ to table $T_0$, and all other blocks lying on the line containing this block to table $T_1$. Rest all the empty blocks are sent to table $T_0$. \\  
\\            
\textbf{Case 5.1.2.2}
Block having an element from $S_3$ coincides with the block having an element from $S_1$. Without loss of generality, let us say that block having element $n_5$ coincides with the block having the element $n_1$. In this case, we send the block having the element $n_5$ to table $T_0$, and the rest of the block lying on the line, which contains this block to table $T_1$. Now we see the position of the block having the element from $S_2$. \newline One of the blocks having an element from $S_2$ coincides with the empty block on the line, which contains block having the element $n_5$. Without loss of generality, let us say that block having element $n_3$ coincides with an empty block on the line, which contains block having the element $n_5$. In this case, we send the block having an element $n_3$ to table $T_0$, and the rest of the block lying on the line containing this block to table $T_1$. Now we see the position of the block having element $n_4$, if it lies on the line which contains block having element from $S_1$, then we send the block having element $n_1$ and $n_2$ to table $T_1$, and the rest of the empty block lying on the  line containing these blocks to table $T_0$. Rest all the empty blocks are sent to table $T_0$. On another hand, if the block having the element $n_4$ does not lie on the line which contains block having the element from $S_1$, then we send the block having the element $n_2$ to table $T_0$, and the rest of the block lying on the line containing  this block to table $T_1$. We send the rest of the empty block to table $T_0$.\newline If the block having the element from $S_2$ do not lie on the line which contains block having the element $n_5$, then we send the blocks having an element from $S_2$ to table $T_1$, and the rest of the empty blocks on the line which contains these blocks to table $T_0$. Also, we send the blocks having the element $n_1$ and $n_2$ to table $T_1$. We send rest of the empty blocks to table $T_0$.\\ 
\\            
\textbf{Case 5.1.3}
None of the block having elements from $S_2$ or $S_3$ coincides with the blocks having element from $S_1$ in the table $T_1$. In this case, we can have two cases. Either the block having the element from $S_3$ coincides with the block having the element from $S_2$, or it does not. \newline Let us first consider the case where a block having the element from $S_3$ coincides with one of the blocks having an element from $S_2$. Without loss of generality, we can say that block having the element $n_5$  coincides with the block having the element $n_4$. In this case, we send the block having the element $n_5$ to table $T_1$, and the rest of the empty block lying on the line, which contains this block to table $T_0$. We send the block having the element $n_4$ to table $T_0$, and the rest of the block lying on the line, which contains this block to table $T_1$. Now we see the positions of the blocks having an element from  $S_1$. Let us first consider the case where a block having the element from $S_1$ lies on the line, which contains blocks having elements from $S_2$. Without loss of generality, let us say block having the element $n_1$ lies on the line, which contains blocks having elements from $S_2$. In this case, we send the block having the element $n_1$ to the table $T_0$, and all other block lying on the line containing this block to table $T_1$. Rest all the empty blocks are sent to table $T_0$. On the other hand, if the blocks having elements from $S_1$ do not lie on the line which contains blocks having elements from $S_2$, then we send the block having elements from $S_1$ to table $T_1$, and rest all the empty blocks to table $T_0$. \newline If the block having an element from $S_3$ do not coincide with a block having an element from $S_2$ then we send all the block having elements to table $T_1$, and all the empty blocks to table $T_0$.\\
\\
\textbf{Case 5.2}
Now we consider the case where one of the sets having elements lie on a line and other set having elements lie on the different lines. Without loss of generality, let us consider the case where elements belonging to  $S_1$ lies on a line and the elements belonging to $S_2$ lies on the different lines.\\
\\            
\textbf{Case 5.2.1}
All the blocks having elements from $S_2$ and $S_3$  coincides with the block having an element from $S_1$. In this case, we send the block having an element from $S_2$ and $S_3$ to table $T_0$, and rest of the empty block lying on the lines containing these blocks to table $T_1$. Further, we send the blocks having elements from $S_1$ to table $T_1$, and the rest of the empty blocks lying on the line, which contains these blocks to table $T_0$. We send the rest of the empty blocks to table $T_0$.\\ 
\\        
\textbf{Case 5.2.2}
Two blocks having elements from $S_2$ and $S_3$ coincides with the block having an element from $S_1$. Now here we can have two cases either those two blocks have elements belonging to $S_2$ or we can have one element belonging to $S_2$ and other to $S_3$.\newline Let us first consider the case where two blocks having elements from $S_2$ coincides with a block having an element from $S_1$. without loss of generality, let us say that block having element $n_3$ coincides with the block having the element $n_1$, and the block having the element $n_4$ coincides with the block having $n_2$. In this case, we send the block having elements $n_3$ and $n_4$ to table $T_0$, and the rest of the empty block lying on the lines containing these blocks to table $T_1$. Now we see the position of the block having the element $n_5$. At this point, we can have two cases either the line which contains the block having the element $n_5$ passes through one of the block having an element from $S_1$ or it does not. Let us first consider the case where it passes through one of the blocks having elements from $S_1$. Without loss of generality, let us say that the line which contains block having the element $n_5$ passes through the block having the element $n_1$. In this case, we send the block having the element $n_5$ to table $T_0$, and the rest of the block lying on the line, which contains this block to table $T_1$. Further, we send the block having the element $n_1$ to table $T_0$, and the rest of the block lying on the line containing this block to table $T_1$. Rest all the empty blocks are sent to table $T_0$. On another hand, if the line which contains the block having the element $n_5$ do not pass through blocks having elements from $S_1$, then we send the block having the element $n_5$ to table $T_0$, and rest of the empty blocks on the line containing this block to table $T_1$. Furthermore, we send the blocks having elements from $S_1$ to table $T_1$, and the rest of the empty blocks to table $T_0$.\newline Now we consider the case where one of the blocks having an element from $S_2$, and another block having an element from $S_3$ coincides with the block having an element from $S_1$. Without loss of generality, say blocks having the element $n_3$ and $n_5$ coincides with a block(blocks) having an element(elements) from $S_1$. Now here we can have two cases, either $n_3$ and $n_5$ coincides with the same block having an element from $S_1$, or it coincides with different blocks having elements from $S_1$. Let us first consider the case where blocks having element $n_3$ and $n_5$ coincides with same block having an element from $S_1$, say $n_1$. In this case, we send the blocks having elements from $S_2$ and $S_3$ to table $T_0$, and the rest of the empty blocks lying on the lines containing these blocks to table $T_1$. Further, we send the block having the element $n_2$ to table $T_0$, and the rest of the block, which lies on the line containing this block to table $T_1$. Rest all the empty blocks are sent to table $T_0$. Now we consider the case where blocks having elements $n_3$ and $n_5$ coincides with different blocks having elements from $S_1$. Without loss of generality, let us say that block having element $n_3$ coincide with the block having the element $n_1$, and the block having the element $n_5$ coincide with the block having the element $n_2$. In this case, we send the blocks having elements from $S_2$ and $S_3$ to table $T_0$, and all the empty blocks lying on the line containing these blocks to table $T_1$. Further, we send the block having the element $n_2$ to table $T_0$, and rest of the block lying on the line which contains this block to table $T_1$.\\   
\\
\textbf{Case 5.2.3}
Only one block having element from $S_2$ or $S_3$ coincides with the block having an element from $S_1$ in table $T_1$. Now here we can have two cases, either the block having an element from $S_2$ coincides, or the block having an element from $S_3$ coincides with the block having an element from $S_1$.\\
\\ 
\textbf{Case 5.2.3.1} A block having an element from $S_2$ coincides with a block having an element from $S_1$ in table $T_1$. Without loss of generality, let us say that the block having the element $n_3$ coincides with the block having the element $n_1$. Now we see the position of the blocks having the element $n_4$ and $n_5$. Let us first consider the case where blocks having elements $n_4$ and $n_5$ coincide.\newline If the blocks having elements $n_4$ and $n_5$ coincides on the line which contains blocks having elements from $S_1$, then we send the blocks having elements $n_3,n_4$ and $n_5$ to table $T_0$, and all the blocks lying on the lines containing these blocks to table $T_1$. We send the block having element $n_1$ and $n_2$ to table $T_1$. Rest all the empty blocks are sent to table $T_0$.\newline If the blocks having elements $n_4$ and $n_5$ coincides outside the line which contains blocks having elements from $S_1$, then we send the blocks having elements $n_2,n_3$ and $n_4$ to table $T_0$, and all the blocks lying on the lines containing these blocks to table $T_1$. We send the blocks having elements $n_1$ and $n_5$ to table $T_1$. Rest all the empty blocks are sent to table $T_0$.\newline Now we are left with the case where blocks having elements $n_4$ and $n_5$ do not coincide. This can have several cases. Let us first consider the case where both the blocks having elements $n_4$ and $n_5$ lies on the line, which contains blocks having elements from $S_1$. In this case, we send the blocks having elements $n_3,n_4$, and $n_5$ to table $T_0$, and all the blocks lying on the lines containing these blocks to table $T_1$. We send the blocks having elements $n_1$ and $n_2$ to table $T_1$. Rest all the empty blocks are sent to table $T_1$.\newline Now let us consider the case where $n_4$ and $n_5$ do not lie on the line, which contains block having elements from $S_1$. In this case, we send the blocks having elements $n_3,n_4$, and $n_5$ to table $T_1$, and all the blocks lying on the lines containing these blocks to table $T_0$. Block having the element $n_1$ is sent to table $T_0$, and rest all the blocks lying on the line containing this block is sent to table $T_1$. Rest all the empty blocks are sent to table $T_0$.\newline Now, we can also have a case where only one block having the element $n_4$ or $n_5$ lies on the line, which contains blocks having elements from $S_1$. Let us first consider the case where block having the element $n_4$ lies on the line, which contains block having elements from $S_1$. In this case, we send the blocks having elements $n_3,n_4$, and $n_5$ to table $T_0$, and all other blocks lying on the line containing these blocks to table $T_1$. Now, we see whether the block having elements from $S_1$ lies on the line, which contains block having the element $n_5$. Without loss of generality, let us say block having the element $n_1$ lies on the line, which contains block having the element $n_5$. In this case, we send the block having the element $n_1$ to table $T_0$, and all other blocks lying on the line, which contains this block to table $T_0$. Rest all the empty blocks are sent to table $T_1$. On another hand, if the line which contains block having the element $n_5$ do not pass through block having element from $S_1$, then we send the blocks having elements from $S_1$ to table $T_1$. Rest all the empty blocks are sent to table $T_0$. Now let us consider the case where block having the element $n_5$ lies on the line, which contains block having elements from $S_1$. In this case, we send the blocks having elements $n_2,n_3,n_4$, and $n_5$ to table $T_0$, and all other blocks lying on the line containing these blocks to table $T_1$. Rest all the empty blocks are sent to table $T_0$.\\
\\            
\textbf{Case 5.2.3.2}
The block having element from $S_3$ coincides with block having element from $S_1$ in table $T_1$. Without loss of generality, let us say that block having element $n_5$ coincides with the block having the element $n_1$. Now we see the position of the blocks having elements from $S_2$.\newline Both the blocks having an element from $S_2$ lies on the line, which contains block having an element from $S_1$ in table $T_1$. In this case, we send the blocks having the elements $n_3,n_4$, and $n_5$ to table $T_0$, and all the blocks lying on the lines containing these blocks to table $T_1$. Blocks having elements from $S_1$ are sent to table $T_1$ and rest all the empty blocks are sent to table $T_0$.\newline Both the blocks having an element from $S_2$ do not lie on the line, which contains blocks having an element from $S_1$. In this case, we send the blocks having elements $n_3,n_4$, and $n_5$ to table $T_1$, and all the empty blocks lying on the lines containing these blocks to table $T_0$. Further, we send the block having the element $n_1$ to table $T_0$, and all the blocks lying on the line containing this block to table $T_1$. Rest all the empty blocks are sent to table $T_0$.\newline Now let us consider the case where only one block having the element from $S_2$ lies on the line, which contains blocks having elements from $S_1$ in table $T_1$. Without loss of generality, let us say that block having element $n_3$ lies on the line, which contains block having an element from $S_1$. In this case, we send the blocks having elements $n_4$ and $n_5$ to table $T_1$, and all the empty blocks lying on the lines which contain these blocks to table $T_0$. We send the blocks having the element $n_1$ and $n_3$ to table $T_0$, and rest all the blocks lying on the lines, which contain these blocks to table $T_1$. Rest all the empty blocks are sent to table $T_0$.\\
\\
\textbf{Case 5.2.4}
None of the blocks having element from $S_2$ or $S_3$ coincides with the block having elements from $S_1$ in table $T_1$. In this case, we can have either the block having the element $n_5$ coincides with the block having elements from $S_2$ or it does not. \newline Let us first consider the case where the block having the element $n_5$ coincides with one of the blocks having the element from $S_2$. Without loss of generality, let us say that the block having the element $n_5$ coincides with the block having the element $n_3$. Now we see whether the block having the element $n_4$ lies on the line, which contains block having elements from $S_1$ or not. Let us first consider the case where block having the element $n_4$ lies on the line, which contains block having an element from $S_1$. In this case, we send the block having the element $n_3$ to table $T_0$, and all the blocks lying on the line containing this block to table $T_1$. We send the block having the element $n_5$ to table $T_1$, and all the blocks lying on the line, which contains this block to table $T_0$. Now we see whether any of the block having elements from $S_1$ lies on the line having the element $n_3$ or not. Without loss of generality, let us say the block having the element $n_1$ lies on the line, which contains block having the element $n_3$. In this case, we send the blocks having the elements $n_1$ and $n_4$ to table $T_0$, and all other blocks lying on the lines which contain these blocks to table $T_1$. Rest all the empty blocks are sent to table $T_0$. If none of the blocks having elements from $S_1$ lies on the line, which contains block having the element $n_3$, then we send the blocks having elements $n_1,n_2$, and $n_4$ to table $T_1$, and rest all the empty blocks to table $T_0$. Now let us consider the case where block having the element $n_4$ do not lie on the line, which contains block having elements from $S_1$. In this case, the assignment made in the previous paragraph will work if we send the block having the element $n_4$ to table $T_1$, and all the blocks lying on the line containing this block to table $T_0$.
\newline Now, we are left with the case where block having the element $n_5$ do not coincide with the block having an element from $S_2$. In this case, we send the blocks having elements to table $T_1$, and all the empty blocks to table $T_0$.\\
\\        
\textbf{Case 5.3}
All the elements belonging $S_1,S_2$ and $S_3$ lies on the different line. In this case, we send the blocks having elements to table $T_0$ and all the empty blocks to table $T_1$.\\
\\
\textbf{Case 6}
Two elements belonging to $S_1=\{n_1,n_2\}$ lies in a same superblock, and the element $n_3,n_4$ and $n_5$ to the different superblocks. If the blocks having elements from $S_1$ lies on the different lines, then the assignment made in Case 5.3 can be used. So let us consider the case where blocks having the element from $S_1$ lies on the same line.\\
\\
\textbf{Case 6.1}
Only one block having element from $n_3,n_4$ and $n_5$ coincide with the block having element from $S_1$. Without loss of generality let us say block having the element $n_3$ coincide with the block having the element $n_1$.\\
\\        
\textbf{Case 6.1.1}
Blocks having elements  $n_4$ and $n_5$ coincide.\\ 
\\            
\textbf{Case 6.1.1.1}
Blocks having elements $n_4$ and $n_5$ coincides on the line which contains blocks having elements from $S_1$. In this case, we send the blocks having elements $n_3,n_4$, and $n_5$ to table $T_0$, and the rest of the empty blocks lying on the lines containing these blocks to table $T_1$. Blocks having the elements $n_1$ and $n_2$ are sent to table $T_1$, and rest all the empty blocks are sent to table $T_0$.\\  
\\
\textbf{Case 6.1.1.2}
Blocks having elements $n_4$ and $n_5$  coincides outside the line which contains element from $S_1$. Now here we can have several cases depending upon whether the lines which contain blocks having elements $n_4$ and $n_5$ passes through blocks having elements from $S_1$ or not. \newline Without loss of generality, let us say that line which contains block having the element $n_4$ passes through the block having the element $n_1$, and the line which contain block having the element $n_5$ passes through the block having the element $n_2$. Now in this case we send the blocks having elements $n_4$ and $n_1$ to table $T_1$. Further, we send the blocks having elements $n_3,n_5$ and $n_2$ to table $T_0$, and rest of the blocks lying on the lines containing these blocks to table $T_1$. We send rest of the empty blocks to table $T_0$.\newline Now consider a case where only one line which contains a block from $n_4$ or $n_5$ passes through the block having an element from $S_1$. Without loss of generality, let us say block having the element $n_4$ passes through the block having an element from $S_1$. Lets first consider the case where the line which contains block having the element  $n_4$ passes through the block having the element $n_2$. In this case, we send the blocks having elements $n_2,n_3,n_4$, and $n_5$ to table $T_0$, and rest of the blocks lying on the lines containing these blocks to table $T_1$. Further, we send the block having the element $n_1$ to table $T_1$, and the rest of the empty blocks to table $T_0$. Now without loss of generality, we can also have a case where block having the element $n_4$ passes through the block having the element $n_1$. In this case, we send the block having the elements $n_1,n_3,n_4$, and $n_5$ to table $T_0$, and the rest of the blocks lying on the lines containing these blocks to table $T_1$. We send the block having the element $n_2$ to table $T_1$, and the rest of the empty blocks to table $T_0$. If none of the lines which contains blocks having elements $n_4$ and $n_5$ passes through blocks having elements from $S_1$, then we send the blocks having elements $n_3,n_4$ and $n_5$ to table $T_0$, and all the blocks lying on the lines containing these blocks to table $T_1$. Further, we send the blocks having elements from $S_1$ to table $T_1$. Rest all the empty blocks are sent to table $T_0$. \\      
\\
\textbf{Case 6.1.2}
Blocks having the elements $n_4$ and $n_5$ do not coincide.\\
\\ 
\textbf{Case 6.1.2.1}
Both the blocks having element $n_4$ and $n_5$ lies on the line which contains element from $S_1$. In this case, we send the blocks having elements $n_3,n_4$, and $n_5$ to table $T_0$, and rest of the empty blocks lying on the line containing these blocks to table $T_1$. Further, we send the blocks having elements $n_1$ and $n_2$ to table $T_1$, and rest of the empty blocks to table $T_0$.\\ 
\\
\textbf{Case 6.1.2.2}
One of the blocks having an element  $n_4$ or $n_5$ lies on the line, which contains blocks having elements from  $S_1$ and other lies outside of it. Without loss of generality, let us say block having the element $n_4$ lies on the line, which contains blocks having elements from  $S_1$, and block having the element $n_5$ lies outside it. In this case, we send the blocks having elements $n_3,n_4$, and $n_5$ to table $T_0$, and all the blocks lying on the lines containing these blocks to table $T_1$. If any block having elements from $S_1$ lies on the line, which contains block having elements $n_5$, then we send that block to table $T_0$, and all other blocks lying on the line containing that block to table $T_1$. Rest all the empty blocks are sent to table $T_1$. On another hand, if none of the blocks having elements from $S_1$ lies on the line which contains block having elements $n_5$, then we send the block having elements from $S_1$ to table $T_1$, and rest all the empty blocks to table $T_0$.  \\
\\ 
\textbf{Case 6.1.2.3}
None of the blocks having elements $n_4$ or $n_5$ lies on the line, which contains blocks having elements from $S_1$. In this case we send the blocks having elements $n_2,n_3,n_4$ and $n_5$ to table $T_1$. Further, we send the block having the element $n_1$ to table $T_0$, and the rest of the empty blocks lying on the line, which contains this block to table $T_1$. Rest all the empty blocks are sent to table $T_0$.\\
\\
\textbf{Case 6.2}
Two blocks having elements from $n_3,n_4$ or $n_5$ coincides with the block having elements from $S_1$. Without loss of generality, let us say that blocks having elements $n_3$ and $n_4$ coincides with the block having elements from $S_1$. Now, here we can have two cases, either the blocks having elements $n_3$ and $n_4$ coincides with the same block having an element from $S_1$ or it coincides with the different blocks having elements from $S_1$. \newline Let us first consider the case where blocks having the element $n_3$ and $n_4$ coincides with different blocks having elements from $S_1$. Without loss of generality, let us say that block having element $n_3$ coincides with the block having the element $n_1$ and the block having the element $n_4$ coincides with the block having the element $n_2$. Now we see position of the block having element $n_5$. The block containing element $n_5$ can either lie on the line, which contains blocks having element from $S_1$ or not. Let us first consider the case where block having element $n_5$ do not lie on the line , which contains block having element from $S_1$. Further, let us consider the intersection of the line, which contains block having the element $n_5$ from the blocks having elements from $S_1$. Without loss of generality let us say that the line which contains block having element $n_5$ passes through the block having element $n_1$, in this case, we send the blocks having elements $n_1,n_3,n_4$ and $n_5$ to table $T_0$, and all other blocks lying on the lines which contains these blocks to table $T_1$. Rest all the empty blocks are sent to table $T_0$. On the another hand, if the block having element $n_5$ do not pass through the blocks having elements from $S_1$, then we send the blocks having $n_3,n_4$ and $n_5$ to table $T_0$, and all other blocks lying on the lines which contains these blocks to table $T_1$. Further, we send the blocks having elements from $S_1$ to table $T_1$, and rest all the empty blocks to table $T_0$. If the block having element $n_5$ lies on the line which contains blocks having elements from $S_1$, then we send the blocks having elements $n_3, n_4$ and $n_5$ to the table $T_0$, and all the empty blocks lying on these lines to table $T_1$. Further, we send the blocks having elements from $S_1$ to table $T_1$, and rest all the empty blocks to table $T_0$.\newline
Now we are left with the case where blocks having the element $n_3$ and $n_4$ coincides with only one block having an element from $S_1$. Without loss of generality, let us say that blocks having elements $n_3$ and $n_4$ coincides with the block having the element $n_1$. In this case, we see the position of the block having the element $n_5$. If the block having the element $n_5$ lies on the line which contains blocks having elements from $S_1$, then we send the block having the element $n_3,n_4$ and $n_5$ to table $T_0$, and all the empty blocks lying on the lines containing these blocks to table $T_1$. We send the blocks having elements from $S_1$ to table $T_1$. Rest all the empty blocks are sent to table $T_0$. Further, if the block having the element $n_5$ do not lie on the line which contains blocks having element from $S_1$ then we send the block having element $n_3,n_4$ and $n_5$ to table $T_0$, and all the empty blocks lying on the lines containing these blocks to table $T_1$. Now we see whether the line which contains block having the element $n_5$ passes through a block having an element from $S_1$ or not. Let us first consider the case where the line which contains block having the element $n_5$ passes through a block having an element from $S_1$. Without loss of generality, let us say that the line which contains block having the element $n_5$ passes through the block having the element $n_1$. In this case, we send the block having the element $n_1$ to table $T_0$, and all other blocks lying on the line containing this block to table $T_1$. Rest all the empty blocks are sent to table $T_0$. On another hand, if the line which contains block having the element $n_5$ do not pass through the block having an element from $S_1$, then we send the blocks having the elements from $S_1$ to table $T_1$. Rest all the empty blocks are sent to table $T_0$.\\        
\\
\textbf{Case 6.3} All the three blocks having elements $n_3,n_4$ and $n_5$ coincides with the blocks having elements from $S_1$. In this case, we send the blocks having elements $n_3,n_4$, and $n_5$ to table $T_0$, and all the blocks lying on the lines containing these blocks to table $T_1$. Blocks having elements from $S_1$ are sent to table $T_1$. Rest all the empty blocks are sent to table $T_1$.\\
\\ 
\textbf{Case 6.4}
None of the blocks having elements from $n_3,n_4$ and $n_5$ coincides with the blocks having elements from $S_1$.\newline Let us first consider the case where all the elements $n_3,n_4$ and $n_5$ coincides. If all of them coincides on the line which contains blocks having elements from $S_1$, then we send the blocks having elements $n_3,n_4$ and $n_5$ to table $T_0$, and rest of the empty blocks lying on the lines containing these blocks to table $T_1$. Further, we send the blocks having elements $n_1$ and $n_2$ to table $T_1$. Rest all the empty blocks are sent to table $T_0$.\newline If the blocks having elements $n_3,n_4$, and $n_5$ coincide outside the line, which contains blocks having elements from $S_1$, then we see the intersection of the lines containing these blocks with the blocks having elements from $S_1$. Now since all the blocks having elements from $n_3,n_4$, and $n_5$ coincides at most two lines having blocks containing these elements can intersect with the blocks having elements from $S_1$. Without loss of generality, let us say that the line which contains block having the element $n_3$ passes through the block having the element $n_1$, and the line which contains block having the element $n_4$ passes through the block having the element $n_2$. In this case we send the blocks having element $n_1$ and $n_3$ to table $T_1$. Further, we send the blocks having elements $n_2,n_4$, and $n_5$ to table $T_0$, and the rest of the empty blocks lying on the line containing these blocks to table $T_1$. Rest all the empty blocks are sent to table $T_0$.\newline Without loss of generality, let us now consider a case where only one line having element say $n_3$ passes through the block having element, say $n_1$, then we send the blocks having elements $n_1,n_2$ and $n_3$ to table $T_1$. Further, we send the blocks having elements $n_4$ and $n_5$ to table $T_0$, and the rest of the empty blocks lying on the lines containing these blocks to table $T_1$. Rest all the empty blocks are sent to table $T_0$.\newline If none of the lines containing blocks having elements $n_3,n_4$ and $n_5$ passes through the blocks having element from $S_1$, then we send the blocks having elements $n_3,n_4$ and $n_5$ to table $T_0$, and rest of the empty blocks lying on the line containing these blocks to table $T_1$. Further, we send the blocks having element $n_1$ and $n_2$ to table $T_1$, and rest of the empty blocks to table $T_0$.\newline Now let us consider the case where only two blocks having elements from $n_3,n_4$, and $n_5$ coincides. Without loss of generality, let us say the blocks having elements $n_3$ and $n_4$ coincide. Now we see the intersection of lines containing the blocks $n_3$ and $n_4$. Let us first consider the case where both the lines containing blocks $n_3$ and $n_4$ passes through the blocks having elements from $S_1$. Without loss of generality, let us say the line which contains block having the element $n_3$ passes through the block having the element $n_1$, and the line which contains block having the element $n_4$ passes through the block having the element $n_2$. In this case, we send the blocks having elements $n_1$ and $n_3$ to table $T_0$, and the rest of the blocks lying on the line containing these blocks to table $T_1$. We send the block having the element $n_2$ and $n_4$ to table $T_1$, and all other blocks lying on the lines containing these blocks to table $T_0$. Now, we see the position of the block having the element $n_5$. Kindly note that it is important to remember that no three lines passing through a point lie in the same plane in this case. If the block having the element $n_5$ lies on the line, which contains block having the element, then we send the block having the element $n_5$ to table $T_0$, and all other blocks lying on the line containing this block to table $T_1$. Rest all the empty blocks are sent to table $T_0$. \newline Now
let us consider the case where only one line which has block having the element $n_3$ or $n_4$ passes through the block having elements from $S_1$. Without loss of generality, let us say that block having the element $n_3$ passes through the block having the element $n_1$. In this case, we send the block having the element $n_3$ to table $T_1$, and all the blocks lying on the line containing these blocks to table $T_0$. The block having the element $n_4$ is sent to table $T_0$, and all the blocks lying on the line containing this block to table $T_1$. Now we see the position of the block having the element $n_5$. Let us first consider the case where block having the element $n_5$ lies on the line, which contains blocks having elements from $S_1$. In this case, if the line which contains block having the element $n_5$ passes through the block having the element $n_3$ then we send the block having the element $n_5$ to table $T_1$, and all the block lying on the line containing this block to table $T_0$. The blocks having elements from $S_1$ are sent to table $T_1$, and all the blocks lying on the line containing these blocks to table $T_0$. Rest all the empty blocks are sent to table $T_0$. On another hand, if the line which contains block having the element $n_5$ do not pass through the block having the element $n_3$ then we send the block having $n_5$ to table $T_0$, and all the blocks lying on the line containing this block to table $T_1$. Further, the blocks having elements from $S_1$ are sent to table $T_1$, and all the blocks lying on the line containing these blocks to table $T_0$. Rest all the empty blocks are sent to table $T_0$. If the block having the element $n_5$ do not lie on the line which contains blocks having elements from $S_1$, then we send the block having the element $n_5$ to table $T_0$, and all the blocks lying on the line containing this block to table $T_1$. Now we see whether the block having elements from $S_1$ lies on the line, which contains block having the element $n_5$ or not. Without loss of generality, let us say that block having element $n_1$ lies on the line, which contains block having the element $n_5$. In this case, we send the block having the element $n_1$ to table $T_0$, and all the blocks lying on the line containing this block to table $T_1$. Rest all the empty blocks are sent to table $T_0$. On the other hand, if none of the blocks having element from $S_1$ lies on the line which contains block having element $n_5$, then we send the block having elements from $S_1$ to table $T_1$, and all the empty blocks lying on the line containing these blocks to table $T_0$. Rest all the empty blocks are sent to table $T_0$.\newline If none of the lines which contains block having elements $n_3$ or $n_4$ passes through block having element from $S_1$, then we send the blocks having elements $n_3,n_4$ and $n_5$ to  table $T_0$, and all the empty blocks lying on the lines containing these blocks to table $T_1$. Now, either the block having an element from $S_1$ lies on the line which contains block having the element $n_5$ or it does not. Let us first consider the case where block having elements from $S_1$ lies on the line, which contains block having the element $n_5$. Without loss of generality, let us say that block having element $n_1$ lies on the line, which contains block having the element $n_5$. In this case, we send the block having the element $n_1$ to table $T_0$, and all the blocks lying on the line containing this block to table $T_1$. Rest all the empty blocks are sent to table $T_0$. If none of the block having an element from $S_1$ lies on the line, which contains block having the element $n_5$, then we send the blocks having an element from $S_1$ to table $T_1$. Rest all the empty blocks are sent to table $T_0$. \newline Now we are left with the case where none of the blocks having elements from $n_3,n_4$, and $n_5$ coincides. In this case, we send the blocks having elements to table $T_1$, and the rest of the empty blocks to table $T_0$.\\         
\\            
\textbf{Case 7}
All the blocks having elements  $n_1,n_2,n_3,n_4$ and $n_5$ lies in the different  superblocks. In this case, we send all the blocks having elements to table $T_0$, and all the empty blocks to table $T_1$.
\end{document}